\title{Constructing the space of valuations of a quasi-Polish space as a space of ideals}
\author{Matthew de Brecht \thanks{This work was supported by JSPS KAKENHI Grant Number 18K11166. We thank the reviewers for carefully reading this paper and providing feedback.}}
\date{}
\newtheorem{theorem}{Theorem}
\newtheorem{definition}[theorem]{Definition}
\newtheorem{lemma}[theorem]{Lemma}
\newenvironment{proof}{\noindent{\bf Proof:}}{\qed\medskip}
\def\squareforqed{\hbox{\rlap{$\sqcap$}$\sqcup$}}
\def\qed{\ifmmode\squareforqed\else{\unskip\nobreak\hfil
\penalty50\hskip1em\null\nobreak\hfil\squareforqed
\parfillskip=0pt\finalhyphendemerits=0\endgraf}\fi}
\def\IN{{\mathbb N}}
\def\IQ{{\mathbb Q}}
\def\sierp{{\mathbb S}}
\def\Rext{\overline{\mathbb R}_+}
\def\calB{{\mathcal B}}
\def\calP{{\mathcal P}}
\def\bpi#1{\mathbf{\Pi}^0_{#1}}
\def\coanalytic{\mathbf{\Pi}^1_1}
\def\PU{{\mathrm{\mathbf K}}}
\def\PO{{\mathrm{\mathbf O}}}
\def\PV{{\mathrm{\mathbf V}}}
\def\I#1{{{\mathrm{\mathbf I}}(#1)}}
\def\calPfin{{\mathcal P}_{\mathrm{fin}}}
\newcommand{\QCBZ}{\mathsf{QCB_0}}
\newcommand{\ACAZ}{\mathsf{ACA_0}}
\newcommand{\coanalyticCA}{{\mathbf{\Pi}^1_1-\mathsf{CA_0}}}
\newcommand{\uparw}{{\uparrow}}
\begin{document}

\maketitle

\begin{abstract}
We construct the space of valuations on a quasi-Polish space in terms of the characterization of quasi-Polish spaces as spaces of ideals of a countable transitive relation. Our construction is closely related to domain theoretical work on the probabilistic powerdomain, and helps illustrate the connections between domain theory and quasi-Polish spaces. Our approach is consistent with previous work on computable measures, and can be formalized within weak formal systems, such as subsystems of second order arithmetic.
\end{abstract}


\section{Introduction}

Quasi-Polish spaces \cite{dbr} are a class of well-behaved countably based sober spaces that includes Polish spaces, $\omega$-continuous domains, and countably based spectral spaces. They can be interpreted via Stone-duality as the spaces of models of countably axiomatized propositional geometric theories \cite{H15,Chen19}. In \cite{DPS} another characterization of quasi-Polish spaces was presented that is a natural generalization of the notion of an \emph{abstract basis} for $\omega$-continuous domains \cite{g03}. In this paper we use this latter characterization to extend domain theoretical work on probabilistic powerdomains to the study of valuations on quasi-Polish spaces.

Valuations are a substitute for Borel measures which are used in the denotational semantics of probabilistic programming languages \cite{jones:phd} and in computable approaches to measure theory, probability theory, and randomness \cite{Sch:Prob,HR09,PSZ20}. See R.~Heckmann's excellent paper \cite{H96} for more on the theory of valuations, spaces of valuations, and integration\footnote{The valuations in this note correspond to the \emph{Scott-continuous valuations} in \cite{H96}.}. Every valuation on a quasi-Polish space can be extended to a Borel measure \cite{DGJL}, and this extension is unique if the valuation is locally finite \cite{dbrCCA15}. Conversely, it is easy to see that the restriction of a Borel measure to the open sets is a valuation. Thus, in particular, there is a bijection between probabilistic valuations and probabilistic Borel measures on quasi-Polish spaces.

The main result in this paper is a construction of the space of valuations on a quasi-Polish space as a space of ideals of a transitive relation on a countable set (Theorem~\ref{thrm:mainresult}). Our construction is closely related to domain theoretical work on the probabilistic powerdomain (see \cite{jones:phd} and \cite[Section~IV-9]{g03}). Along with the constructions of the upper and lower powerspaces of quasi-Polish spaces as spaces of ideals given in \cite{dbr20}, our results demonstrate how some domain theoretic results generalize well to quasi-Polish spaces (see also \cite{DK19} for more on the upper and lower powerspaces of quasi-Polish spaces). 

An immediate corollary of our construction is that the space of valuations on a quasi-Polish space is again a quasi-Polish space, although this already follows from well-known results. A locale theoretic proof easily follows from S.~Vickers' geometricity result in \cite[Proposition~5]{V08} by using R.~Heckmann's characterization of quasi-Polish spaces as countably presented locales \cite{H15}. A proof based on quasi-metrics, at least for the case of subprobabilistic valuations, follows from J.~Goubault-Larrecq's work on continuous Yoneda-complete quasi-metric spaces in \cite[Section~11]{GL17} and his characterization of quasi-Polish spaces in \cite[Theorem~8.18]{GL17b}. Independently, the first proof we found (which we presented at the Domains~XII conference in August 2015) was largely based on M.~Schr\"{o}der's work in \cite{Sch:Prob} on the space of (probabilistic) measures within the cartesian closed category $\QCBZ$. That proof starts with the observation that the $\QCBZ$ exponential $\sierp^{\sierp^X}$ is quasi-Polish whenever $X$ is\footnote{See \cite{DK19} for a proof. The $\sierp$ here is the Sierpinski space, and the space $\PO(\PO(X))$ defined in \cite{DK19} is homeomorphic to the $\QCBZ$ exponential object $\sierp^{\sierp^X}$ when $X$ is quasi-Polish.}, then uses the cartesian closed structure of $\QCBZ$ to show that $Y^{\sierp^X}$ is quasi-Polish whenever $X$ and $Y$ are, and finally observes that M.~Schr\"{o}der's construction of the space of valuations on $X$ can be obtained as the equalizer of the continuous functions $\ell, r\colon \Rext^{\sierp^X} \to\Rext\times\Rext^{\sierp^X\times \sierp^X}$ defined as:
\begin{eqnarray*}
\ell(\nu) &=& \big\langle \nu(\emptyset), \lambda\langle U,V\rangle. \nu(U)+\nu(V) \big\rangle, \text{ and}\\
r(\nu) &=& \big\langle 0, \lambda\langle U,V\rangle. \nu(U\cup V)+\nu(U\cap V) \big\rangle.
\end{eqnarray*}
It follows that the space of valuations is quasi-Polish because the space of extended reals $\Rext$ is quasi-Polish and the category of quasi-Polish spaces is closed under countable limits.

A nice characteristic of the construction we give in this paper is that it can be formalized within relatively weak formal systems. For example, our approach is related to C.~Mummert's formalization of general topology within subsystems of second order arithmetic \cite{mummert:phd,M06,MS10}\footnote{Note that C.~Mummert's \emph{MF-spaces} are in general $\coanalytic$-complete spaces, whereas quasi-Polish spaces correspond to the $\bpi 2$-level of the Borel hierarchy. This explains why $\coanalyticCA$ is required to prove MF-spaces are closed under $G_\delta$-subsets, whereas our construction of $\bpi 2$-subspaces of quasi-Polish spaces in Theorem~3 of \cite{dbr20} can be done within $\ACAZ$.}.

\section{Main result}

We let $\Rext$ denote the positive extended reals (i.e., $[0,\infty]$) with the Scott-topology induced by the usual order. Given a topological space $X$, we let $\PO(X)$ denote the lattice of open subsets of $X$ with the Scott-topology. 

\begin{definition}[Valuations]
Let $X$ be a topological space. A \emph{valuation} on $X$ is a continuous function $\nu\colon \PO(X) \to \Rext$ satisfying:
\begin{enumerate}
\item
$\nu(\emptyset) = 0$, and\hfill(\emph{strictness})
\item
$\nu(U) + \nu(V) = \nu(U\cup V) + \nu(U \cap V)$.\hfill(\emph{modularity})
\end{enumerate}
The \emph{space of valuations} on $X$ is the set $\PV(X)$ of all valuations on $X$ with the \emph{weak topology}, which is generated by subbasic opens of the form
\[ \langle U,q \rangle := \{\nu\in\PV(X) \mid \nu(U) > q\}\]
with $U \in \PO(X)$ and $q \in \Rext\setminus\{\infty\}$.
\qed
\end{definition}

In this paper we will only consider the whole space of valuations $\PV(X)$, but it is straightforward to modify our results for the subspaces of $\PV(X)$ consisting of  \emph{probabilistic valuations} (i.e., valuations satisfying $\nu(X)=1$) and \emph{sub-probabilistic valuations} (i.e., valuations satisfying $\nu(X) \leq 1$). 

Quasi-Polish spaces were introduced in \cite{dbr}. In this paper we will define them using the following equivalent characterization from \cite{DPS} (see also \cite{dbr20}).
\begin{definition}\label{def:idealspace}
Let $\prec$ be a transitive relation on $\IN$. A subset $I\subseteq \IN$ is an \emph{ideal} (with respect to $\prec$) if and only if:
\begin{enumerate}
\item
$I \not=\emptyset$,\hfill (\emph{$I$ is non-empty})
\item
$(\forall a \in I) (\forall b \in \IN)\, (b \prec a \Rightarrow b \in I)$,\hfill (\emph{$I$ is a lower set})
\item
$(\forall a,b \in I)(\exists c\in I)\, (a \prec c  \,\&\,  b \prec c)$.\hfill (\emph{$I$ is directed})
\end{enumerate}
The collection $\I{\prec}$ of all ideals has the topology generated by basic open sets of the form $[n]_{\prec} = \{ I \in \I{\prec} \mid n \in I\}$. A space is \emph{quasi-Polish} if and only if it is homeomorphic to $\I{\prec}$ for some transitive relation $\prec$ on $\IN$.
\qed
\end{definition}
We often apply the above definition to other countable sets with the implicit assumption that it has been suitably encoded as a subset of $\IN$.

Fix a transitive relation $\prec$ on $\IN$ for the rest of this section. Let $\calB$ be the (countable) set of all partial functions $r:\subseteq \IN \to \IQ_{>0}$ such that $dom(r)$ is finite, where $\IQ_{>0}$ is the set of rational numbers strictly larger than zero.

\begin{definition}
Define the transitive relation $\prec_V$ on $\calB$ as $r \prec_V s$ if and only if 
\[\sum_{b\in F} r(b) < \sum_{c \in \uparw F \cap dom(s)} s(c)\]
for every non-empty $F\subseteq dom(r)$, where $\uparw F = \{ c\in\IN \mid (\exists b\in F)\, b\prec c \}$.
\qed
\end{definition}

Transitivity of $\prec_V$ follows from the transitivity of $\prec$. Note that if $dom(r)=\emptyset$ then $r \prec_V s$ for every $s \in \calB$. We will sometimes use the fact that if $r \prec_V s$ and $b \in dom(r)$ then there is $c \in dom(s)$ with $b \prec c$.

\begin{definition}
Define $f_V\colon \PV(\I{\prec}) \to \I{\prec_V}$ and $g_V \colon \I{\prec_V} \to \PV(\I{\prec})$ as 
\begin{eqnarray*}
f_V(\nu) &=& \left\{ r\in \calB \,\middle|\,  \sum_{b\in F} r(b) < \nu(\bigcup_{b\in F} [b]_{\prec}) \text{ for every non-empty $F\subseteq dom(r)$} \right\},\\
g_V(I) &=& \lambda U.\bigvee\left\{ \sum_{b\in dom(r)} r(b) \,\middle|\, r\in I \text{ and } \bigcup_{b\in dom(r)} [b]_{\prec}\subseteq U \right\}.
\end{eqnarray*}
\qed
\end{definition}

We next prove a few lemmas which will be used to show that $f_V$ and $g_V$ are continuous inverses of each other.

\begin{lemma}\label{lem:formval_restrictions}
If $I \in \I{\prec_V}$, $r \in I$, and $A \subseteq dom(r)$, then $r|_A \in I$, where $r|_A$ is the partial function obtained by restricting the domain of $r$ to $A$.
\end{lemma}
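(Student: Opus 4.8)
The tempting first move is to try to show $r|_A \prec_V r$ and then invoke the lower-set property of $I$ directly. This does \emph{not} work in general: since $\prec$ need not be reflexive, the indices in $F$ themselves need not lie in $\uparw F$, so the defining inequality $\sum_{b\in F}(r|_A)(b) < \sum_{c\in\uparw F\cap dom(r)}r(c)$ can fail. The fix is to first replace $r$ by an upper bound of it inside $I$, obtained from directedness, and then use the fact that shrinking the domain of the \emph{left} argument preserves $\prec_V$.

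Concretely, the plan is as follows. First, if $A=\emptyset$ then $r|_A$ is the empty partial function; since $dom(r|_A)=\emptyset$ we have $r|_A\prec_V s$ for every $s\in\calB$, and as $I$ is non-empty we may pick any $s\in I$ and conclude $r|_A\in I$ by the lower-set property. So assume $A\neq\emptyset$.

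Next, apply directedness of $I$ to the pair $r,r\in I$ to obtain $s\in I$ with $r\prec_V s$. I then claim $r|_A\prec_V s$. Indeed, let $F\subseteq dom(r|_A)=A$ be non-empty. Then $F$ is a non-empty subset of $dom(r)$, so $r\prec_V s$ gives $\sum_{b\in F}r(b) < \sum_{c\in\uparw F\cap dom(s)}s(c)$; and since $(r|_A)(b)=r(b)$ for all $b\in F\subseteq A$, and $\uparw F$ depends only on $F$ and $\prec$ (not on the function), the left-hand side equals $\sum_{b\in F}(r|_A)(b)$. Hence $\sum_{b\in F}(r|_A)(b) < \sum_{c\in\uparw F\cap dom(s)}s(c)$, which is exactly the condition for $r|_A\prec_V s$. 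Since $r|_A\in\calB$ (its domain $A$ is finite and it takes values in $\IQ_{>0}$), $s\in I$, and $r|_A\prec_V s$, the lower-set property of $I$ yields $r|_A\in I$.

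There is no real obstacle here beyond recognizing that the naive approach via $r|_A\prec_V r$ is unavailable; the genuine content is the observation that $\prec_V$ is downward-closed under restriction of the domain of its left argument, which is immediate from the definition once one notices that $\uparw F$ is insensitive to the values of $r$.
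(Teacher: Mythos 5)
Your proof is correct and follows essentially the same route as the paper: use directedness to get $s\in I$ with $r\prec_V s$, observe that $r|_A\prec_V s$ (which holds vacuously when $A=\emptyset$, so the separate case is not even needed), and conclude by the lower-set property. The extra remarks about why $r|_A\prec_V r$ can fail are accurate but not part of the paper's (one-line) argument.
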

\begin{proof}
Since $I$ is directed there is $s\in I$ with $r\prec_V s$. Then clearly $r|_A \prec_V s$ hence $r|_A \in I$ because $I$ is a lower set.
\end{proof}

\begin{definition}
Define the transitive binary relation $\prec_U$ on $\calPfin(\IN)$ (the set of finite subsets of $\IN$) as $F \prec_U G$ if and only if $(\forall n\in G)\,(\exists m\in F)\, m \prec n$.
\qed
\end{definition}

We write $\PU(X)$ for the space of saturated compact subsets of  $X$ (see \cite{DK19}).

\begin{lemma}[Lemma~9 \& Theorem~10 of \cite{dbr20}]\label{lem:upperpowerspacelemma}
Given $J \in \I{\prec_U}$, the set 
\[g_U(J) = \{ I \in \I{\prec} \mid (\forall F\in J)(\exists m\in I)\,m\in F\}\]
is in $\PU(\I\prec)$. Furthermore, for any $S\subseteq \IN$, 
$g_U(J) \subseteq \bigcup_{b\in S}[b]_{\prec}$ if and only if there is finite $F\subseteq S$ with $F \in J$.
\qed
\end{lemma}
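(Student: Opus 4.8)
The plan is to treat the three assertions of the claim in turn --- that $g_U(J)$ is saturated, that it is compact, and the covering criterion --- reducing compactness to the covering criterion, which carries essentially all of the content. Saturation is immediate: the specialization order on $\I\prec$ is inclusion of ideals, so a saturated subset is one closed under passing to larger ideals, and the condition defining $g_U(J)$ is clearly preserved when $I$ is enlarged. For compactness I would use that $\{[n]_\prec\}_{n\in\IN}$ is a countable basis: an arbitrary open cover of $g_U(J)$ refines to a basic cover $\{[b]_\prec\}_{b\in S}$, the criterion then supplies a finite $F\subseteq S$ with $F\in J$, and the trivial reverse inclusion $g_U(J)\subseteq\bigcup_{b\in F}[b]_\prec$ is a finite subcover. (Applying the criterion with $S=\emptyset$ also shows $g_U(J)$ is empty precisely when $\emptyset\in J$, so it is a legitimate member of $\PU(\I\prec)$ in every case.) One implication of the criterion is trivial: if $F\subseteq S$ is finite with $F\in J$ and $I\in g_U(J)$, the defining property of $g_U(J)$ applied to $F$ yields $m\in I$ with $m\in F\subseteq S$, so $I\in[m]_\prec$.

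For the converse implication I would argue by contraposition: assuming no finite $F\subseteq S$ belongs to $J$, I will construct an ideal $I\in g_U(J)$ with $I\cap S=\emptyset$, which witnesses $g_U(J)\not\subseteq\bigcup_{b\in S}[b]_\prec$. First, $\emptyset\notin J$, since otherwise $\emptyset$ itself is a finite subset of $S$ lying in $J$; hence every member of $J$ is a nonempty finite set. Enumerating $J$ and iterating directedness, fix a $\prec_U$-increasing sequence $G_0\prec_U G_1\prec_U\cdots$ in $J$ that is cofinal, meaning every element of $J$ is $\prec_U$-below some $G_k$. Call $n\in\IN$ \emph{clean} if $n\notin S$ and no $m\in S$ satisfies $m\prec n$; transitivity of $\prec$ makes cleanness downward closed, so any $\prec$-chain of clean elements generates an ideal disjoint from $S$.

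Set $C_k=\{n\in G_k\mid n\text{ is clean}\}$. Since $G_k\prec_U G_{k+1}$, each $n\in G_{k+1}$ has some $m\in G_k$ with $m\prec n$; fixing such a choice gives a map $\phi_k\colon G_{k+1}\to G_k$ with $\phi_k(n)\prec n$, and by downward closure of cleanness it restricts to a map $C_{k+1}\to C_k$. The key point is that every $C_k$ is nonempty: otherwise $C_k=\emptyset$ for some $k$, and then each $m\in G_{k+1}$ has a predecessor $m'\in G_k$ with $m'$ not clean, so either $m'\in S$ (an element of $S$ that is $\prec m$) or there is $b\in S$ with $b\prec m'$, hence $b\prec m$ by transitivity; collecting these strict $S$-predecessors over all $m\in G_{k+1}$ yields a finite $F\subseteq S$ with $F\prec_U G_{k+1}$, whence $F\in J$ because $J$ is a lower set and $G_{k+1}\in J$, contradicting the assumption. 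With every $C_k$ nonempty, I apply K\"onig's Lemma to the finitely branching tree whose level-$k$ vertices are the elements of $C_k$, with parent map $\phi_k$ and a virtual root adjoined below level $0$: this tree is infinite, so it has an infinite branch $n_0\prec n_1\prec\cdots$ with $n_k\in C_k$. Then $I=\{m\in\IN\mid m\prec n_k\text{ or }m=n_k\text{ for some }k\}$ is an ideal of $\prec$ (a routine verification using transitivity and the chain condition), it is disjoint from $S$ because each of its elements is clean, and it meets every $G\in J$: given $G\in J$, cofinality provides $k$ with $G\prec_U G_k$, so $n_k\in I\cap G_k$ has a $\prec$-predecessor in $G$, which again lies in $I$. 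Hence $I\in g_U(J)$ with $I\cap S=\emptyset$.

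The step I expect to be the main obstacle is showing that every $C_k$ is nonempty. The naive approach --- replacing each element of a ``dirty'' set $G_k$ by a strict $S$-predecessor --- can fail, since an element of $G_k$ may lie in $S$ yet have no strict $S$-predecessor. The remedy above is to pass to a $\prec_U$-larger $G_{k+1}$ and use transitivity inside the all-dirty $G_k$ to produce, for \emph{every} element of $G_{k+1}$, a genuine strict predecessor inside $S$. (This direction could instead be derived from well-filteredness of the sober space $\I\prec$, but the explicit construction is preferable here because it proceeds by elementary means formalizable in weak systems.)
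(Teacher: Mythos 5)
Your argument is correct. Note that this paper does not actually prove the lemma --- it is imported verbatim from Lemma~9 and Theorem~10 of \cite{dbr20} --- so what you have produced is a self-contained replacement for the citation rather than a variant of a proof given here. Your reductions are all sound: saturation because $g_U(J)$ is an upper set for the specialization order (inclusion of ideals), compactness from the covering criterion by refining an arbitrary cover to a basic one indexed by $S$ and using $g_U(J)\subseteq\bigcup_{b\in F}[b]_{\prec}$ for the resulting $F\in J$, and the easy direction of the criterion is immediate. The real content is the contrapositive of the hard direction, and your construction works: taking a cofinal $\prec_U$-chain $(G_k)$ in $J$, the ``clean'' elements $C_k\subseteq G_k$ are nonempty (your passage from $G_k$ to $G_{k+1}$ correctly repairs the genuine obstacle that an element of $G_k\cap S$ need not have a strict $S$-predecessor, since $\prec$ need not be reflexive), the predecessor maps $\phi_k$ restrict to $C_{k+1}\to C_k$, and K\"onig's lemma yields a $\prec$-chain of clean elements whose down-closure is an ideal meeting every member of $J$ (via cofinality) while missing $S$; the degenerate case $\emptyset\in J$ is also handled, consistently with the fact that $\emptyset\in\PU(\I{\prec})$. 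This elementary tree argument also fits the paper's stated concern with formalizability in weak subsystems of second-order arithmetic (bounded K\"onig's lemma suffices here, since the tree is explicitly bounded by the $G_k$), whereas the alternative you mention --- invoking sobriety/well-filteredness of $\I{\prec}$ --- would be shorter but less elementary and obscures exactly the effective content that \cite{dbr20} and this paper care about.
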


\begin{lemma}\label{lem:formval_upperpower}
If $I \in \I{\prec_V}$ and $r \in I$, then there exists $s \in I$ with $r \prec_V s$ and $dom(r) \prec_U dom(s)$.
\end{lemma}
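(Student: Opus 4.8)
The goal is, given $I \in \I{\prec_V}$ and $r \in I$, to find $s \in I$ with both $r \prec_V s$ and $dom(r) \prec_U dom(s)$. The second condition, unpacking the definition of $\prec_U$, requires that every element of $dom(s)$ lie $\prec$-above some element of $dom(r)$; the first condition is the valuation-ideal relation. The plan is to start from an arbitrary witness of directedness and then shrink its domain. Concretely, first I would use directedness of $I$ to pick $t \in I$ with $r \prec_V t$. This already gives $r \prec_V t$, but $dom(t)$ may contain elements not above anything in $dom(r)$, so $dom(r) \prec_U dom(t)$ can fail.

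The key step is to set $s := t|_A$ where $A = \uparw dom(r) \cap dom(t) = \{\, c \in dom(t) \mid (\exists b \in dom(r))\, b \prec c \,\}$. By Lemma~\ref{lem:formval_restrictions}, $s = t|_A \in I$. By construction every element of $dom(s) = A$ is $\prec$-above some element of $dom(r)$, which is exactly $dom(r) \prec_U dom(s)$ (note $A$ is finite since $dom(t)$ is). So it remains to check $r \prec_V s$, i.e., that for every non-empty $F \subseteq dom(r)$ we have $\sum_{b \in F} r(b) < \sum_{c \in \uparw F \cap dom(s)} s(c)$. The crucial observation is that $\uparw F \cap dom(s) = \uparw F \cap A = \uparw F \cap \uparw dom(r) \cap dom(t) = \uparw F \cap dom(t)$, since $F \subseteq dom(r)$ forces $\uparw F \subseteq \uparw dom(r)$. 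Hence $\sum_{c \in \uparw F \cap dom(s)} s(c) = \sum_{c \in \uparw F \cap dom(t)} t(c) > \sum_{b \in F} r(b)$, where the inequality is precisely $r \prec_V t$ applied to $F$. This gives $r \prec_V s$.

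The main obstacle — really the only thing that needs care — is the bookkeeping identity $\uparw F \cap dom(s) = \uparw F \cap dom(t)$ that lets the inequality from $r \prec_V t$ transfer verbatim to $s$; everything else is a direct appeal to Lemma~\ref{lem:formval_restrictions} and unwinding definitions. One should also note the degenerate case $dom(r) = \emptyset$, where $r \prec_V s$ holds for any $s \in I$ and $dom(r) \prec_U dom(s)$ holds vacuously, so one may take $s = r$ (or any element of $I$).
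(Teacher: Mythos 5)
Your proof is correct and is essentially the paper's own argument: choose $t \in I$ with $r \prec_V t$ by directedness, restrict $t$ to $\uparw dom(r) \cap dom(t)$, and invoke Lemma~\ref{lem:formval_restrictions}, with your identity $\uparw F \cap dom(s) = \uparw F \cap dom(t)$ just spelling out the paper's ``clearly $r \prec_V s$.'' (One tiny quibble with your closing aside: when $dom(r)=\emptyset$ the relation $dom(r) \prec_U dom(s)$ is not vacuous but forces $dom(s)=\emptyset$, so ``any element of $I$'' does not work --- though $s=r$ does, and your main construction already handles this case correctly.)
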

\begin{proof}
Choose any $t\in I$ with $r \prec_V t$. Let $s$ be the restriction of $t$ to have $dom(s) = \{ c\in dom(t) \mid (\exists b\in dom(r))\, b \prec c\}$. Clearly $r \prec_V s$ and $dom(r) \prec_U dom(s)$, and Lemma~\ref{lem:formval_restrictions} implies $s \in I$.
\end{proof}

\begin{lemma}\label{lem:formval_upperpower2}
Assume $I \in \I{\prec_V}$ and $r \in I$. Then there exists $K \in \PU(\I{\prec})$ such that
\begin{itemize}
\item
$K \subseteq \bigcup_{b\in dom(r)} [b]_{\prec}$, and 
\item
For any finite $F \subseteq \IN$, if $K \subseteq \bigcup_{b\in F} [b]_{\prec}$, then there is $s\in I$ with $r \prec_V s$ and $F \prec_U dom(s)$ and $K \subseteq \bigcup_{c\in dom(s)} [c]_{\prec} \subseteq \bigcup_{b\in F} [b]_{\prec}$.
\end{itemize}
\end{lemma}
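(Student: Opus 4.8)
The plan is to produce $K$ in the form $g_U(J)$ for a carefully chosen $J\in\I{\prec_U}$, and then extract both bullets from Lemma~\ref{lem:upperpowerspacelemma}. To build $J$, I would first apply Lemma~\ref{lem:formval_upperpower} to $r$ to obtain $s_1\in I$ with $r\prec_V s_1$ and $dom(r)\prec_U dom(s_1)$, and then iterate Lemma~\ref{lem:formval_upperpower} (applied to $s_i$) to get a chain $s_1\prec_V s_2\prec_V\cdots$ in $I$ with $dom(s_i)\prec_U dom(s_{i+1})$ for every $i\geq 1$. Set
\[ J=\{\,F\in\calPfin(\IN)\mid (\exists i\geq 1)\; F\prec_U dom(s_i)\,\}. \]

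The next step is to verify $J\in\I{\prec_U}$. It is nonempty because $dom(r)\prec_U dom(s_1)$ witnesses $dom(r)\in J$; it is a lower set by transitivity of $\prec_U$; and it is directed because if $F\prec_U dom(s_i)$ and $F'\prec_U dom(s_j)$ with, say, $i\leq j$, then transitivity of $\prec_U$ along $dom(s_i)\prec_U\cdots\prec_U dom(s_j)$ gives $F\prec_U dom(s_j)$ and $F'\prec_U dom(s_j)$, while $dom(s_j)\in J$ because $dom(s_j)\prec_U dom(s_{j+1})$. By Lemma~\ref{lem:upperpowerspacelemma}, $K:=g_U(J)\in\PU(\I{\prec})$. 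For the first bullet, $dom(r)$ is a finite subset of $dom(r)$ belonging to $J$, so Lemma~\ref{lem:upperpowerspacelemma} gives $K\subseteq\bigcup_{b\in dom(r)}[b]_{\prec}$.

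For the second bullet, suppose $F\subseteq\IN$ is finite with $K\subseteq\bigcup_{b\in F}[b]_{\prec}$. Lemma~\ref{lem:upperpowerspacelemma} supplies a finite $F'\subseteq F$ with $F'\in J$, say $F'\prec_U dom(s_i)$ with $i\geq 1$; I would take $s:=s_i\in I$. Then $r\prec_V s$ follows by transitivity of $\prec_V$ along $r\prec_V s_1\prec_V\cdots\prec_V s_i$ (this is exactly where the restriction $i\geq 1$ is needed, since $r\prec_V r$ may fail), and $F\prec_U dom(s)$ follows from $F'\prec_U dom(s)$ because $F'\subseteq F$. Since ideals of $\prec$ are lower sets, $F\prec_U dom(s)$ implies $[c]_{\prec}\subseteq\bigcup_{b\in F}[b]_{\prec}$ for each $c\in dom(s)$, hence $\bigcup_{c\in dom(s)}[c]_{\prec}\subseteq\bigcup_{b\in F}[b]_{\prec}$; and $dom(s)=dom(s_i)$ is a finite subset of itself lying in $J$ (because $dom(s_i)\prec_U dom(s_{i+1})$), so Lemma~\ref{lem:upperpowerspacelemma} gives $K\subseteq\bigcup_{c\in dom(s)}[c]_{\prec}$. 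This is the required chain of inclusions.

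I expect the main obstacle to be arranging a single $J$ whose associated compact set $K$ simultaneously yields all of the conclusions in the second bullet for one common witness $s$; the linear chain $s_1\prec_V s_2\prec_V\cdots$ is what makes this possible, since $\prec_V$ by itself imposes no $\prec_U$-relation between the domains of comparable members of $I$, so the more obvious definition of $J$ via ``some $s\in I$ with $r\prec_V s$ and $F\prec_U dom(s)$'' fails to be directed. A secondary technical point is the off-by-one care needed so that the witness satisfies $r\prec_V s$ strictly; restricting attention to indices $i\geq 1$ handles it and also makes the degenerate case $dom(r)=\emptyset$ (where $K=\emptyset$) go through with no separate argument.
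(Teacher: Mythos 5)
Your proof is correct and follows essentially the same route as the paper: build a $\prec_V$-ascending chain via Lemma~\ref{lem:formval_upperpower} with $\prec_U$-related domains, let $J$ be the induced ideal in $\I{\prec_U}$, take $K=g_U(J)$, and read off both bullets from Lemma~\ref{lem:upperpowerspacelemma}. Your explicit handling of the index shift ($i\geq 1$) and of the finite subset $F'\subseteq F$ only makes small steps explicit that the paper leaves implicit.
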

\begin{proof}
Fix $I \in \I{\prec_V}$ and $r\in I$. Using Lemma~\ref{lem:formval_upperpower}, we can find a $\prec_V$-ascending sequence $(r_i)_{i\in\IN}$ in $I$ with $r=r_0$ and $dom(r_i) \prec_U dom(r_{i+1})$ for each $i\in\IN$. Then $J = \{ F \in \calPfin(\IN) \mid (\exists i\in\IN)\, F \prec_U dom(r_i) \}$ is in $\I{\prec_U}$, hence $K = g_U(J)\in \PU(\I{\prec})$ and $K \subseteq \bigcup_{b\in dom(r)} [b]_{\prec}$ by Lemma~\ref{lem:upperpowerspacelemma} and the fact that $dom(r) \in J$. Assume $F \subseteq \IN$ is finite and $K \subseteq \bigcup_{b\in F} [b]_{\prec}$. Then $F \in J$ by Lemma~\ref{lem:upperpowerspacelemma}, hence $F \prec_U dom(r_i)$ for some $i\in \IN$. Since $\prec_U$ is transitive, we can assume without loss of generality that $i >0$. Setting $s = r_i$, we have $s \in I$ and $r \prec_V s$ and $F \prec_U dom(s)$, and since $dom(s) \in J$ it follows from Lemma~\ref{lem:upperpowerspacelemma} that $K \subseteq \bigcup_{c\in dom(s)} [c]_{\prec}$. The claim $\bigcup_{c\in dom(s)} [c]_{\prec} \subseteq \bigcup_{b\in F} [b]_{\prec}$ follows from $F \prec_U dom(s)$.
\end{proof}

\begin{lemma}\label{lem:sum_of_differences}
Let $D \subseteq \IN$ be finite,  and let $\calP_{+}(D)$ be the set of non-empty subsets of $D$. Define
\begin{eqnarray*}
U_G &=& \bigcap_{b\in G} [b]_{\prec}\\
V_G &=& U_G \cap \bigcup_{b\in D \setminus G} [b]_{\prec}
\end{eqnarray*}
for each $G \in\cal{P}_{+}(D)$. Let $P \subseteq \calP_{+}(D)$ be an upper set (i.e., if $F \in P$ and $F \subseteq G \subseteq D$ then $G\in P$). If $\nu\in\PV(\I{\prec})$ and $\nu(U_G) < \infty$ for each $G \in P$, then 
\[\sum_{G\in P} (\nu(U_G) - \nu(V_G)) = \nu\left(\bigcup_{G\in P} U_G \right).\] 
\end{lemma}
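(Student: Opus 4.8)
The plan is to prove this by induction on $|P|$, the number of sets in the upper set $P$. When $P = \emptyset$ both sides are $0$ (the empty union is $\emptyset$ and $\nu(\emptyset)=0$ by strictness), so that case is immediate. For the inductive step, I would pick a \emph{minimal} element $G_0 \in P$ with respect to inclusion, and set $P' = P \setminus \{G_0\}$. Because $P$ is an upper set and $G_0$ is minimal in it, $P'$ is again an upper set (removing a minimal element of an upper set leaves an upper set), so the induction hypothesis applies to $P'$: $\sum_{G\in P'}(\nu(U_G)-\nu(V_G)) = \nu\big(\bigcup_{G\in P'}U_G\big)$. It then remains to show
\[
\nu(U_{G_0}) - \nu(V_{G_0}) + \nu\Big(\bigcup_{G\in P'}U_G\Big) = \nu\Big(\bigcup_{G\in P}U_G\Big).
\]

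The key geometric observation driving the whole argument is an identification of the intersection term $U_{G_0} \cap \bigcup_{G\in P'}U_G$ with $V_{G_0}$. Concretely, I claim that
\[
U_{G_0}\cap\bigcup_{G\in P'}U_G \;=\; V_{G_0}\;=\;U_{G_0}\cap\bigcup_{b\in D\setminus G_0}[b]_\prec .
\]
For the inclusion $\subseteq$: if $I\in U_{G_0}\cap U_G$ for some $G\in P'$, then since $G\ne G_0$ and $G_0$ is minimal in $P$, $G$ is not a subset of $G_0$, so there is some $b\in G\setminus G_0 \subseteq D\setminus G_0$ with $I\in[b]_\prec$; hence $I\in V_{G_0}$. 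For $\supseteq$: if $I\in V_{G_0}$, there is $b\in D\setminus G_0$ with $I\in[b]_\prec$; then $I\in U_{G_0\cup\{b\}}$, and $G_0\cup\{b\}$ properly contains $G_0$ so it lies in $P$ (upper set) and in $P'$ (it is not $G_0$), giving $I\in\bigcup_{G\in P'}U_G$. Granting this identity, modularity of $\nu$ applied to the pair of opens $U_{G_0}$ and $\bigcup_{G\in P'}U_G$ yields
\[
\nu(U_{G_0}) + \nu\Big(\bigcup_{G\in P'}U_G\Big) = \nu\Big(\bigcup_{G\in P}U_G\Big) + \nu(V_{G_0}),
\]
which rearranges exactly to the displayed equation above, completing the inductive step. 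The finiteness hypothesis $\nu(U_G)<\infty$ for $G\in P$ is what lets us rearrange the equation (subtract $\nu(V_{G_0})$, noting $V_{G_0}\subseteq U_{G_0}$ so $\nu(V_{G_0})<\infty$ too, and likewise the induction hypothesis sum is finite).

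The main obstacle I anticipate is getting the set-theoretic identity $U_{G_0}\cap\bigcup_{G\in P'}U_G = V_{G_0}$ exactly right, since it is precisely here that minimality of $G_0$ in $P$ and the upper-set property of $P$ are both used, and it is easy to state the wrong inclusion. A secondary point requiring a little care is confirming that $P'$ remains an upper set after deleting a minimal element — this is true but should be stated — and that all the relevant $\nu$-values stay finite so that the arithmetic rearrangement is legitimate in $\Rext$. Everything else (the base case, the final rearrangement) is routine once the identity and modularity are in hand.
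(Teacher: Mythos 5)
Your proposal is correct and follows essentially the same route as the paper: induction on $|P|$, removal of a minimal element $G_0$, the identity $V_{G_0}=U_{G_0}\cap\bigcup_{G\in P\setminus\{G_0\}}U_G$ (which the paper derives via $V_F=\bigcup_{b\in D\setminus F}U_{F\cup\{b\}}=\bigcup_{G\in P\setminus\{F\}}U_{F\cup G}$), and one application of modularity. Your element-wise verification of that identity and your explicit remarks on $P'$ remaining an upper set and on finiteness are just slightly more detailed versions of what the paper leaves implicit.
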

\begin{proof}
The proof is by induction on the size of $P$. It is trivial when $P = \emptyset$, so assume $P$ is a non-empty upper set and that the lemma holds for all upper sets of size strictly less than $P$. If $F$ is any minimal element of $P$, then
\begin{eqnarray*}
V_F &=& \bigcup_{b\in D\setminus F} U_{F\cup\{b\}}\\
 &=& \bigcup_{G\in P\setminus \{F\}} U_{F\cup G}\\
 &=& U_F \cap \bigcup_{G\in P\setminus \{F\}}  U_G,
\end{eqnarray*}
so the induction hypothesis and modularity yields
\begin{eqnarray*}
\sum_{G\in P} (\nu(U_G) - \nu(V_G)) &=& \nu(U_F) - \nu(V_F) + \sum_{G\in P\setminus\{F\}} (\nu(U_G) - \nu(V_G))\\
&=& \nu(U_F) - \nu\left(U_F \cap \bigcup_{G\in P\setminus \{F\}}  U_G\right) +  \nu\left(\bigcup_{G\in P\setminus\{F\}} U_G \right)\\
&=& \nu\left(\bigcup_{G\in P} U_G \right).
\end{eqnarray*}
\end{proof}

\begin{lemma}
$f_V$ is well-defined and continuous.
\end{lemma}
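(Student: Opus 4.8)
The plan is to treat continuity first, since it drops straight out of the definition of the weak topology, and then to verify the three ideal axioms for $f_V(\nu)$, of which directedness is the only real work. For continuity it suffices to see that $f_V^{-1}([r]_{\prec_V}) = \{\nu\in\PV(\I{\prec})\mid r\in f_V(\nu)\}$ is open for each $r\in\calB$, and unwinding the definition of $f_V$ gives
\[ f_V^{-1}([r]_{\prec_V}) \;=\; \bigcap_{\emptyset\neq F\subseteq dom(r)}\big\langle\,\bigcup\nolimits_{b\in F}[b]_{\prec},\ \sum\nolimits_{b\in F}r(b)\,\big\rangle. \]
Each $\bigcup_{b\in F}[b]_{\prec}$ is open in $\I{\prec}$, each $\sum_{b\in F}r(b)$ is a positive rational and so lies in $\Rext\setminus\{\infty\}$, and the intersection is finite because $dom(r)$ is finite; hence the preimage is a finite intersection of subbasic weak opens, and therefore open.

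For well-definedness I must show $f_V(\nu)\in\I{\prec_V}$. It is non-empty because the empty partial function lies in $\calB$ and meets the defining condition of $f_V(\nu)$ vacuously. For the lower-set property, let $s\in f_V(\nu)$ and $r\prec_V s$, and let $F$ be a non-empty subset of $dom(r)$; by the remark following the definition of $\prec_V$ the set $\uparw F\cap dom(s)$ is non-empty, so applying $s\in f_V(\nu)$ to it and using that $[c]_{\prec}\subseteq[b]_{\prec}$ whenever $b\prec c$ (ideals are lower sets) together with monotonicity of $\nu$,
\[ \sum\nolimits_{b\in F}r(b) \;<\; \sum\nolimits_{c\in\uparw F\cap dom(s)}s(c) \;<\; \nu\big(\bigcup\nolimits_{c\in\uparw F\cap dom(s)}[c]_{\prec}\big) \;\leq\; \nu\big(\bigcup\nolimits_{b\in F}[b]_{\prec}\big), \]
which is precisely the condition placing $r$ in $f_V(\nu)$.

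Directedness is the crux: given $r,s\in f_V(\nu)$ I must produce $t\in f_V(\nu)$ with $r\prec_V t$ and $s\prec_V t$. Put $D=dom(r)\cup dom(s)$. The key structural observation is that for every non-empty $G\subseteq D$,
\[ U_G := \bigcap\nolimits_{b\in G}[b]_{\prec} \;=\; \bigcup\{\,[c]_{\prec}\mid b\prec c\text{ for all }b\in G\,\}, \]
where $\subseteq$ is obtained by iterating directedness of an ideal containing $G$ and $\supseteq$ by the lower-set property; thus $U_G$ is the union of a directed family of finite unions of basic opens, and Scott-continuity of $\nu$ lets me approximate $\nu(U_G)$ from below by $\nu$-values of finite unions of basic opens $[c]_{\prec}$ with each $c$ a common strict $\prec$-upper bound of $G$. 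Applying Lemma~\ref{lem:sum_of_differences} to $D$ and to its upper sets $\{G\mid G\cap F\neq\emptyset\}$ rewrites each $\nu\big(\bigcup_{b\in F}[b]_{\prec}\big)$ as $\sum_{G\cap F\neq\emptyset}(\nu(U_G)-\nu(V_G))$ whenever the relevant $\nu$-values are finite, so $r,s\in f_V(\nu)$ become families of transportation-type strict inequalities, $\sum_{b\in F}r(b)<\sum_{G\cap F\neq\emptyset}m_G$ for non-empty $F\subseteq dom(r)$ and likewise for $s$, where $m_G=\nu(U_G)-\nu(V_G)$.

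From here I would build $t$ by choosing a sufficiently fine finite set $E\subseteq\IN$ of common strict $\prec$-upper bounds of non-empty subsets of $D$ — fine enough that $\nu$ on the sublattice generated by $\{[c]_{\prec}\mid c\in E\}$, itself analysed by Lemma~\ref{lem:sum_of_differences}, recovers enough of the masses $m_G$ — and then defining rational weights $t(c)$ by routing, for each cell, slightly less than its $\nu$-mass upward onto the representatives above it, arranged so that for every non-empty $F\subseteq E$ the partial sum $\sum_{c\in F}t(c)$ stays strictly below $\nu\big(\bigcup_{c\in F}[c]_{\prec}\big)$ while the $t$-mass lying $\prec$-above any non-empty $F\subseteq dom(r)$, respectively $dom(s)$, exceeds $\sum_{b\in F}r(b)$, respectively $\sum_{b\in F}s(b)$. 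Density of $\IQ$ in $\Rext$ supplies the rational weights, a single slack chosen as the minimum over the finitely many constraints makes every inequality strict at once, and cells or subsets on which $\nu$ is infinite are handled separately: by subadditivity of $\nu$, some representative $c$ then has $\nu([c]_{\prec})=\infty$, onto which any finite amount of mass may be dumped freely. The part I expect to be the real obstacle is the choice of $E$ and the verification of these routing inequalities — that is, faithfully matching the non-open ``cells'' determined by $D$ against genuine basic opens $[c]_{\prec}$; the rest is bookkeeping with finitely many rationals.
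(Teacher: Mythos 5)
Your continuity argument, the non-emptiness observation, and the lower-set verification are all correct and essentially identical to the paper's. The problem is directedness, which you rightly identify as the crux and then do not actually prove: the construction of the common $\prec_V$-upper bound $t$ is left as a plan (``choose a sufficiently fine $E$, route slightly less than the $\nu$-mass of each cell onto representatives above it''), and the two verifications that make the lemma true --- that $t$ itself lies in $f_V(\nu)$ and that $r\prec_V t$, $s\prec_V t$ --- are exactly the ``routing inequalities'' you defer. These are not bookkeeping. The membership condition $\sum_{c\in F}t(c)<\nu(\bigcup_{c\in F}[c]_{\prec})$ must hold for \emph{every} non-empty $F\subseteq dom(t)$, including sets mixing representatives of different cells $G$; the paper handles this by an induction on $|F|$ that picks $G$ of minimal size with $F\cap h(G)\neq\emptyset$ and exploits $U_G\cap[c]_{\prec}\subseteq V_G$ for $c\in F\setminus h(G)$. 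Moreover these upper-bound constraints are in tension with the lower-bound constraints needed for $r\prec_V t$ and $s\prec_V t$, and a single additive slack taken as a minimum over finitely many constraints does not resolve that tension: the paper instead scales by a factor $\alpha<1$ with $\sum_{b\in F}r_i(b)<\alpha\,\nu(\bigcup_{b\in F}[b]_{\prec})$ (Lemma~IV-9.11(iii) of \cite{g03}), assigns weights $s_G(c_i)$ pinched between $\frac{1+\alpha}{2}p_i$ and $p_i$ for the incremental masses $p_i$, and only then recovers the lower bounds via Lemma~\ref{lem:sum_of_differences}.

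There is also a concrete error in the one place where you do commit to an argument for the infinite case: from $\nu(U_G)=\infty$ you infer ``by subadditivity'' that some representative $c$ has $\nu([c]_{\prec})=\infty$. This is false: $U_G$ decomposes as a countable union $\bigcup_i[c_i]_{\prec}\cup V_G$, and countably many pieces of finite measure can sum to $\infty$ without any single $[c_i]_{\prec}$ having infinite measure. The paper treats this case by accumulating, via Scott-continuity, finitely many increments $p_i$ whose (scaled) sum exceeds a fixed bound $M>\sum_b r_0(b)+\sum_b r_1(b)$, which is what actually makes the lower-bound inequalities go through when $\nu(U_G)=\infty$. So while your strategic outline (cells $U_G$, $V_G$, the sum-of-differences lemma, approximation by basic opens above $G$) matches the paper's, the proof as written has a genuine gap at its center and a wrong step in the infinite-measure case.
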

\begin{proof}
We first show that $f_V(\nu) \in \I{\prec_V}$ for each $\nu\in\PV(\I{\prec})$.
\begin{enumerate}
\item
\emph{($f_V(\nu)$ is non-empty)}. The partial function with empty domain is in $f_V(\nu)$.
\item
\emph{($f_V(\nu)$ is a lower set)}. Assume $r \prec_V s \in f_V(\nu)$. Let $F \subseteq dom(r)$ be non-empty, and define $G = \uparw F \cap dom(s)$. Since $b \prec c$ implies $[c]_{\prec}\subseteq [b]_{\prec}$ it follows that $\bigcup_{c\in G} [c]_{\prec} \subseteq \bigcup_{b\in F} [b]_{\prec}$. Then
\begin{eqnarray*}
\sum_{b\in F} r(b) &<& \sum_{c \in G} s(c)\qquad\text{\hfill (because $r\prec_V s$)}\\
  &<& \nu(\bigcup_{c\in G} [c]_{\prec}) \qquad\text{\hfill(because $s\in f_V(\nu)$)}\\
  &\leq& \nu(\bigcup_{b\in F} [b]_{\prec}) \qquad\text{\hfill(because $\nu$ is monotonic)},
\end{eqnarray*}
hence $r\in f_V(\nu)$.
\item
\emph{($f_V(\nu)$ is directed)}. Our proof is related to the series of lemmas leading up to Theorem~IV-9.16 in \cite{g03}. Assume $r_0,r_1 \in f_V(\nu)$. For each $i\in\{0,1\}$ and non-empty $F \subseteq dom(r_i)$ fix some real number $\beta^i_F$ satisfying 
\[\sum_{b\in F} r_i(b) < \beta^i_F < \nu\left(\bigcup_{b\in F} [b]_{\prec}\right),\]
and set
\[ \beta = \min\left\{ \frac{\beta^i_F-\sum_{b\in F} r_i(b) }{\sum_{b\in F} r_i(b)} \,\middle|\, i\in\{0,1\} \text{ \& }\emptyset \not= F \subseteq dom(r_i) \right\}. \]
Then $\alpha = 1/\left(1+\beta/2\right)$ satisfies $0 < \alpha < 1$ and is such that 
\[\sum_{b\in F} r_i(b) < \alpha \nu\left(\bigcup_{b\in F} [b]_{\prec}\right)\]
for each $ i\in\{0,1\}$ and non-empty $F \subseteq dom(r_i)$ (see Lemma~IV-9.11 (iii) of \cite{g03}). Set $M = 1 + \sum_{b\in dom(r_0)} r_0(b) + \sum_{b\in dom(r_1)} r_1(b)$, and $D = dom(r_0)\cup dom(r_1)$. Let $U_G$ and $V_G$ be defined as in Lemma~\ref{lem:sum_of_differences} for each non-empty $G \subseteq D$. 

We define a finite set $h(G) \subseteq \IN$ and a function $s_G \colon h(G) \to \IQ_>$ for each non-empty $G\subseteq D$  as follows. If $\nu(U_G) = \nu(V_G)$ then let $h(G) = \emptyset$ and let $s_G$ be the empty function. Otherwise, the set 
\[C = \{ c\in\IN \mid (\forall b\in D)\, [b\prec c \iff  b \in G]\}\]
is non-empty because $\nu(U_G) > \nu(V_G)$ implies there is some ideal containing $G$ which is not in $V_G$. If there is some $c \in C$ with $\nu([c]_{\prec}) = \infty$, then set $h(G) = \{c\}$ and define $s_G \colon h(G) \to \IQ_>$ as $s_G(c) = M$. If no such $c\in C$ exists, then let $(c_i)_{i\in\IN}$ be an enumeration of $C$ and define
\[p_i = \nu([c_i]_{\prec}) - \nu\left([c_i]_{\prec} \cap \left(\bigcup_{k<i}[c_k]_{\prec}\cup V_G\right)\right).\]
Using modularity and a simple inductive argument, we have
\begin{eqnarray*}
\sum_{i\leq n} p_i &=& \nu(\bigcup_{i\leq n} [c_i]_{\prec}) - \nu\left(\bigcup_{i\leq n }[c_i]_{\prec}\cap V_G\right)\\
 &=& \nu(\bigcup_{i\leq n} [c_i]_{\prec} \cup V_G) - \nu(V_G)
\end{eqnarray*}
for each $n\in\IN$. Since $U_G = \bigcup_{i\in\IN} [c_i]_{\prec} \cup V_G$ and $\nu$ is Scott-continuous, there is $n_0\in\IN$ with 
\[ \left(\frac{1+\alpha}{2} \right)\sum_{i\leq n_0} p_i \geq \alpha(\nu(U_G) - \nu(V_G))\]
if $\nu(U_G) < \infty$, and
\[ \left(\frac{1+\alpha}{2} \right)\sum_{i\leq n_0} p_i \geq M\]
if $\nu(U_G) = \infty$. Define 
\[h(G) = \{ c_i \mid i\leq n_0 \,\&\, p_i >0 \}\]
and define $s_G \colon h(G) \to \IQ_>$ so that $s_G(c_i)$ is a positive rational satisfying
\[\left(\frac{1+\alpha}{2} \right) p_i \leq s_G(c_i) < p_i.\]

Since $h(G) \cap h(G') \not=\emptyset$ implies $G = G'$, there is $s\in\calB$ with 
\[dom(s) = \bigcup\{ h(G) \mid G \subseteq D\}\]
satisfying $s(c) = s_G(c)$ for the unique $G\subseteq D$ with  $c \in h(G)$. From the construction of $s$, if $F \subseteq h(G)$ is non-empty then
\begin{eqnarray}\label{form:sum}
\sum_{c\in F} s(c) < \nu(\bigcup_{c\in F} [c]_{\prec}) - \nu(\bigcup_{c\in F} [c]_{\prec} \cap V_G).
\end{eqnarray}
Furthermore, if $h(G)\not=\emptyset$, then $\nu(U_G) < \infty$ implies
\begin{eqnarray}\label{form:sum2}
\alpha\left( \nu(U_G) - \nu(V_G) \right) \leq \sum_{c\in h(G)} s(c),
\end{eqnarray}
and $\nu(U_G) = \infty$ implies
\begin{eqnarray}\label{form:sum3}
M \leq \sum_{c\in h(G)} s(c).
\end{eqnarray}

To show $s \in f_V(\nu)$, we must prove $\sum_{c\in F} s(c) < \nu(\bigcup_{c\in F} [c]_{\prec})$ for each non-empty $F\subseteq dom(s)$. This clearly holds when $F = \{c\}$ is a singleton. Next, assume it holds for all sets of size less than or equal to $n$, and let $F$ be a set of size $n+1$. We can assume $\nu(\bigcup_{c\in F} [c]_{\prec}) < \infty$, since otherwise the claim is trivial. Let $G\subseteq D$ be a set of minimal size satisfying $F\cap h(G) \not=\emptyset$. This implies that either $F\setminus h(G)$ is empty or else it satisfies the induction hypothesis. Furthermore, for any $c \in F \setminus h(G)$ there is $G' \subseteq D$ with $c \in h(G')$, and since the minimality of $G$ implies $G' \not\subseteq G$, there is $b \in G' \setminus G$ with $b \prec c$, which implies $U_G \cap [c]_{\prec} \subseteq V_G$. Therefore,
\begin{eqnarray*}
\sum_{c\in F} s(c) &=& \sum_{c\in F\cap h(G)}s_G(c) +  \sum_{c\in F\setminus h(G)} s(c)\\
&<& \nu(\bigcup_{c\in F\cap h(G)} [c]_{\prec}) - \nu(\bigcup_{c\in F\cap h(G)} [c]_{\prec} \cap V_G) + \nu(\bigcup_{c \in F\setminus h(G)}[c]_{\prec})\\
&&\text{ (by (\ref{form:sum}) and the induction hypothesis)}\\
&\leq&  \nu(\bigcup_{c\in F\cap h(G)} [c]_{\prec}) - \nu(\bigcup_{c\in F\cap h(G)} [c]_{\prec} \cap \bigcup_{c \in F\setminus h(G)}[c]_{\prec})\\
&&  + \nu(\bigcup_{c \in F\setminus h(G)}[c]_{\prec}) \\
&&\text{ (because $U_G \cap [c]_{\prec} \subseteq V_G$ for each $c \in F\setminus h(G)$)}\\
&=& \nu(\bigcup_{c\in F} [c]_{\prec}),
\end{eqnarray*}
which proves $s \in f_V(\nu)$.

Finally, we must show $r_0 \prec_V s$ and $r_1 \prec_V s$. Fix $i\in\{0,1\}$ and non-empty $F \subseteq dom(r_i)$. Set $P =  \{G \subseteq D \mid G \cap F \not=\emptyset\}$ and note that $\uparw F \cap dom(s) = \bigcup_{G \in P} h(G)$. If $\nu(U_G) < \infty$ for each $G \in P$, then using (\ref{form:sum2}) and the fact that $G \not=G'$ implies $h(G) \cap h(G') =\emptyset$, we have
\begin{eqnarray*}
\sum_{c\in\uparw F\cap dom(s)}s(c) &\geq& \sum_{G\in P} \alpha(\nu(U_G) - \nu(V_G))\\
&=& \alpha \nu\left(\bigcup_{G\in P} U_G \right) \qquad\text{ (by Lemma~\ref{lem:sum_of_differences})}\\
&=& \alpha \nu\left(\bigcup_{b\in F} [b]_{\prec}\right)\\
&>& \sum_{b\in F} r_i(b).
\end{eqnarray*}
Otherwise, there is $G\in P$ with $\nu(U_G) = \infty$, so (\ref{form:sum3}) implies
\[\sum_{c\in\uparw F\cap dom(s)}s(c) \geq M > \sum_{b\in F} r_i(b). \]
This completes the proof that $f_V(\nu)$ is directed.
\end{enumerate}
It only remains to show that $f_V$ is continuous. Fix $r \in\calB$. For each $F\subseteq dom(r)$ define $W_F = \bigcup_{b\in F} [b]_{\prec}$ and $q_F = \sum_{b\in F} r(b)$, and set $D = \{ F\subseteq dom(r) \mid F\not= \emptyset\}$. Then $f_V(\nu) \in [r]_{\prec_V}$ if and only if 
\[\nu \in \bigcap_{F\in D}  \langle W_F, q_F \rangle,\]
hence $f_V$ is continuous.
\end{proof}


\begin{lemma}
$g_V$ is well-defined and continuous.
\end{lemma}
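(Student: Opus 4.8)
The plan is to show three things about $g_V$: that $g_V(I)$ is a well-defined valuation on $\I{\prec}$ for each $I \in \I{\prec_V}$, that $g_V(I)$ is Scott-continuous as a function $\PO(\I{\prec}) \to \Rext$ (this is part of being a valuation, so really it folds into the previous point), and that $g_V$ itself is continuous with respect to the weak topology on $\PV(\I{\prec})$ and the ideal topology on $\I{\prec_V}$.

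For well-definedness, fix $I \in \I{\prec_V}$ and write $\mu = g_V(I)$. Strictness $\mu(\emptyset)=0$ is immediate: the only $r \in I$ with $\bigcup_{b\in dom(r)}[b]_\prec \subseteq \emptyset$ is the empty function (if $dom(r) \neq \emptyset$ then some $[b]_\prec$ is non-empty because $I$ being a lower set and Lemma~\ref{lem:formval_restrictions} would force the relevant basic open to be inhabited — more carefully, $r \in I$ with $b \in dom(r)$ gives via Lemma~\ref{lem:formval_upperpower2} a non-empty $K \in \PU(\I\prec)$ inside $\bigcup_{b\in dom(r)}[b]_\prec$), and the empty function contributes the empty sum, i.e.\ $0$. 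Monotonicity is clear from the definition since enlarging $U$ only enlarges the set over which we take the supremum. For Scott-continuity, given a directed family $(U_j)_j$ of opens with union $U$, I would show $\mu(U) = \bigvee_j \mu(U_j)$: the inequality $\geq$ is monotonicity, and for $\leq$, any $r \in I$ with $\bigcup_{b\in dom(r)}[b]_\prec \subseteq U$ has, by Lemma~\ref{lem:formval_upperpower2}, a compact $K$ with $K \subseteq \bigcup_{b \in dom(r)}[b]_\prec \subseteq U$; compactness of $K$ gives some $j$ with $K \subseteq U_j$, and then the second bullet of Lemma~\ref{lem:formval_upperpower2} produces $s \in I$ with $r \prec_V s$ and $\bigcup_{c\in dom(s)}[c]_\prec \subseteq U_j$, so $\sum_{b\in dom(r)} r(b) < \sum_{c \in dom(s)} s(c)$ (apply $r \prec_V s$ with $F = dom(r)$, using $\uparw dom(r) \cap dom(s) = dom(s)$ which follows from $dom(r) \prec_U dom(s)$... actually one needs $\uparw dom(r) \supseteq dom(s)$, i.e.\ every $c \in dom(s)$ lies above some $b \in dom(r)$; this is exactly what the construction in Lemma~\ref{lem:formval_upperpower} via Lemma~\ref{lem:formval_upperpower2} arranges) and this value is $\leq \mu(U_j) \leq \bigvee_j \mu(U_j)$. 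Taking the supremum over $r$ finishes Scott-continuity.

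The main obstacle is \emph{modularity}: proving $\mu(U) + \mu(V) = \mu(U\cup V) + \mu(U\cap V)$ for all open $U, V$. The inequality $\leq$ (superadditivity-type direction, really $\mu(U)+\mu(V) \leq \mu(U\cup V)+\mu(U\cap V)$) should follow by taking $r, r' \in I$ witnessing approximations of $\mu(U), \mu(V)$, passing to a common upper bound $s \in I$ with $r \prec_V s$ and $r' \prec_V s$ and (via Lemma~\ref{lem:formval_upperpower2}) with $dom(s)$ controlled so that $\bigcup_{c\in dom(s)}[c]_\prec$ is sandwiched appropriately, then splitting $dom(s)$ according to whether $[c]_\prec$ meets/is contained in $U$ and $V$ and re-summing; here one wants a decomposition of $s$ into pieces landing in $U\cap V$ and the rest, invoking Lemma~\ref{lem:formval_restrictions} to keep the pieces in $I$, and the basic combinatorial identity $\sum_{c\in dom(s)} s(c) = \sum_{c \in A} s(c) + \sum_{c\in dom(s)\setminus A}s(c)$. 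The reverse inequality $\mu(U\cup V) + \mu(U\cap V) \leq \mu(U)+\mu(V)$ is the more delicate one: given $r$ approximating $\mu(U\cup V)$ and $r'$ approximating $\mu(U\cap V)$, take $s \in I$ above both with $\bigcup_{c\in dom(s)}[c]_\prec \subseteq U\cup V$, and then split $dom(s)$ into $c$ with $[c]_\prec \subseteq U$ versus the rest; the rest need not be inside $V$, so one must first refine $s$ — using Lemma~\ref{lem:formval_upperpower2} with the open cover $\{U, V\}$ of a compact $K \subseteq \bigcup_{c\in dom(s)}[c]_\prec$ — to get $s'$ whose basic opens each sit inside $U$ or inside $V$. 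Then the partition of $dom(s')$ into "inside $U$" and "inside $V\setminus(\text{those})$", together with the values of $s'$ on these pieces being $\prec_V$-dominated in $I$ and hence bounded by $\mu(U)$ and $\mu(V)$ respectively, yields the bound; the $\prec_V$ condition is exactly what guarantees the sub-sums along a partition can be compared with the partition pieces' values. Throughout one uses that $r \prec_V s$ restricted to any non-empty $F \subseteq dom(r)$ gives $\sum_{b\in F} r(b) < \sum_{c\in \uparw F \cap dom(s)}s(c)$, and that the relevant $\uparw F \cap dom(s)$ pieces for disjoint $F$ are contained in disjoint subsets of $dom(s)$ when $dom(r)\prec_U dom(s)$ is arranged via Lemma~\ref{lem:formval_upperpower2} — this disjointness is what makes the re-summing legitimate and is the crux of matching the domain-theoretic probabilistic powerdomain argument.

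Finally, continuity of $g_V$: a subbasic open of $\PV(\I{\prec})$ is $\langle U, q\rangle = \{\mu \mid \mu(U) > q\}$, and $g_V(I) \in \langle U,q\rangle$ iff $\bigvee\{\sum_{b\in dom(r)}r(b) \mid r \in I,\ \bigcup_{b\in dom(r)}[b]_\prec \subseteq U\} > q$, which (since the supremum of a set exceeds $q$ iff some element does, as $q < \infty$) holds iff there exists $r \in I$ with $\bigcup_{b\in dom(r)}[b]_\prec \subseteq U$ and $\sum_{b\in dom(r)}r(b) > q$. The set of such $r$ is a fixed subset $\calB_{U,q} \subseteq \calB$ depending only on $U$ and $q$, and $\bigcup_{b\in dom(r)}[b]_\prec \subseteq U$ is a condition that, for each fixed $r$, determines whether $r \in \calB_{U,q}$; hence $g_V^{-1}(\langle U,q\rangle) = \bigcup_{r \in \calB_{U,q}} [r]_{\prec_V}$, which is open in $\I{\prec_V}$. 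This establishes continuity and completes the lemma.
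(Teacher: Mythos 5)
Your overall architecture matches the paper's: strictness, modularity, Scott-continuity of $g_V(I)$ via the compact set supplied by Lemma~\ref{lem:formval_upperpower2}, and continuity of $g_V$ via $g_V^{-1}(\langle U,q\rangle)=\bigcup\{[r]_{\prec_V} \mid \bigcup_{b\in dom(r)}[b]_{\prec}\subseteq U,\ \sum_b r(b)>q\}$. The continuity arguments (both for $\nu=g_V(I)$ as a Scott-continuous map on opens, which is equivalent to the paper's formulation, and for $g_V$ itself) are fine. The genuine gap is in the hard direction of modularity, which is the heart of the lemma.

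For $g_V(I)(U\cup V)+g_V(I)(U\cap V)\leq g_V(I)(U)+g_V(I)(V)$ you correctly refine to $s'\in I$ with $r,r'\prec_V s'$ and every $c\in dom(s')$ satisfying $[c]_{\prec}\subseteq U$ or $[c]_{\prec}\subseteq V$, but you then invoke ``the partition of $dom(s')$ into `inside $U$' and `inside $V$ minus those'\,''. With a genuine partition $dom(s')=A\sqcup B$ the two sub-sums add up to exactly $\sum_{c\in dom(s')}s'(c)$, so all your accounting yields is $\sum_{b\in dom(r)}r(b)<\sum_{A}s'+\sum_{B}s'\leq g_V(I)(U)+g_V(I)(V)$, i.e.\ subadditivity; the approximant $r'$ of $g_V(I)(U\cap V)$ is never absorbed (indeed $\uparw dom(r')\cap dom(s')$ lies entirely in the $U$-piece $A$, so adding its bound would only give something like $2\,g_V(I)(U)+g_V(I)(V)$). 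The missing idea is the \emph{overlapping} decomposition used in the paper: set $u_0=s'|_{\{c\,\mid\,[c]_{\prec}\subseteq U\}}$ and $u_1=s'|_{\{c\,\mid\,[c]_{\prec}\subseteq V\}}$ (both in $I$ by Lemma~\ref{lem:formval_restrictions}, not disjoint), observe $dom(u_0)\cup dom(u_1)=dom(s')$ and $\uparw dom(r')\cap dom(s')\subseteq dom(u_0)\cap dom(u_1)$, and use the inclusion--exclusion identity $\sum_{dom(s')}s'+\sum_{dom(u_0)\cap dom(u_1)}s'=\sum_{dom(u_0)}u_0+\sum_{dom(u_1)}u_1$: the double-counting of the blocks lying in $U\cap V$ is precisely what pays for $\sum_{b\in dom(r')}r'(b)$. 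Without this step the inequality does not close, so as written the modularity proof fails. (Two minor points: in the easy direction the relevant identity is likewise $\sum_{A}+\sum_{B}=\sum_{A\cup B}+\sum_{A\cap B}$ for the possibly overlapping sets $A=\uparw dom(r)\cap dom(t)$, $B=\uparw dom(r')\cap dom(t)$, not a disjoint split; and your strictness argument needs a one-line justification that the $K$ of Lemma~\ref{lem:formval_upperpower2} is non-empty --- the lemma does not assert this, but if $K=\emptyset$ then taking $F=\emptyset$ in its second bullet forces $dom(s)=\emptyset$, contradicting $r\prec_V s$ with $dom(r)\neq\emptyset$; the paper instead builds an ascending chain $b_0\prec b_1\prec\cdots$ and takes its down-closure.)
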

\begin{proof}
We first show that $\nu = g_V(I)$ is a valuation for each $I\in \I{\prec_V}$.
\begin{enumerate}
\item
$\nu(\emptyset)=0$: Assume $U \in\PO(\I{\prec})$ and $\nu(U)>0$. Then there is $r_0\in I$ and $b_0\in dom(r_0)$ such that $[b_0]_{\prec} \subseteq U$ and $0 < r_0(b_0)$. Since $I$ is directed, there is an infinite sequence $r_0 \prec_V r_1 \prec_V \cdots$ in $I$. Since $b_0 \in dom(r_0)$ and $r_0 \prec_V r_1$, there is $b_1 \in dom(r_1)$ with $b_0 \prec b_1$. Similarly, there must be $b_2 \in dom(r_2)$ with $b_1 \prec b_2$. This yields an infinite sequence $b_0 \prec b_1 \prec \cdots$, hence $\{ c \in \IN \mid (\exists i\in \IN)\, c \prec b_i\}$ is an element of $[b_0]_{\prec} \subseteq U$. Therefore, $U\not=\emptyset$.
\item
$\nu(U)+\nu(V) = \nu(U \cup V) + \nu(U\cap V)$: We first show $\nu(U)+\nu(V) \leq \nu(U \cup V) + \nu(U\cap V)$. Let $r,s\in I$ be such that $(\forall b\in dom(r))\, [b]_{\prec}\subseteq U$ and $(\forall b\in dom(s))\, [b]_{\prec}\subseteq V$. Set
\[p_r = \sum_{b\in dom(r)} r(b), \qquad p_s = \sum_{b\in dom(s)} s(b).\]
Let $t\in I$ be a $\prec_V$-upper bound of $r$ and $s$. Let 
\begin{eqnarray*}
D_r &=& \{ c\in dom(t) \mid (\exists b\in dom(r))\, b \prec c \},\\
D_s &=& \{ c\in dom(t) \mid (\exists b\in dom(s))\, b \prec c \}.
\end{eqnarray*}
Note that $c \in D_r\cap D_s$ implies $[c]_{\prec} \subseteq U\cap V$. Set
\[q_0 = \sum_{c\in D_r\setminus D_s} t(c), \qquad q_1 = \sum_{c\in D_s\setminus D_r} t(c), \qquad q_2 = \sum_{c\in D_r\cap D_s} t(c).\]
Then $r \prec_V t$ implies $p_r \leq q_0 + q_2$ and $s \prec_V t$ implies $p_s \leq q_1 + q_2$. Furthermore, using the fact $t\in I$, Lemma~\ref{lem:formval_restrictions}, and the definition of $\nu$, we obtain $\nu(U\cup V) \geq q_0+q_1+q_2$ and $\nu(U\cap V) \geq  q_2$, hence $p_r + p_s \leq \nu(U\cup V) + \nu(U\cap V)$. It follows that $\nu(U)+\nu(V) \leq \nu(U \cup V) + \nu(U\cap V)$.

Next we show  $\nu(U \cup V) + \nu(U\cap V) \leq \nu(U)+\nu(V)$. Let $r,s\in I$ be such that $(\forall b\in dom(r))\, [b]_{\prec}\subseteq U\cup V$ and $(\forall b\in dom(s))\, [b]_{\prec}\subseteq U\cap V$. Let $K \subseteq \bigcup_{b\in dom(r)} [b]_{\prec}$ be as in Lemma~\ref{lem:formval_upperpower2}. Since $K$ is compact and $K \subseteq U\cup V$, there exists a finite set $F\subseteq \IN$ with $K \subseteq \bigcup_{b\in F} [b]_{\prec}$ and such that each $b\in F$ satisfies $[b]_{\prec} \subseteq U$ or $[b]_{\prec}\subseteq V$. Apply Lemma~\ref{lem:formval_upperpower2} to get $t\in I$ with $r \prec_V t$ and $F \prec_U dom(t)$ and $K \subseteq \bigcup_{c\in dom(t)} [c]_{\prec} \subseteq \bigcup_{b\in dom(r)} [b]_{\prec}$. Next let $u\in I$ be a $\prec_V$-upper bound of $t$ and $s$. By restricting the domain of $u$ if necessary, we can assume that $(dom(t) \cup dom(s)) \prec_U dom(u)$, hence every $c\in dom(u)$ satisfies $[c]_{\prec} \subseteq U$ or $[c]_{\prec} \subseteq V$. Let $u_0$ be the restriction of $u$ to have domain $dom(u_0) = \{ b\in dom(u) \mid [b]_{\prec} \subseteq U\}$, and let  $u_1$ be the restriction of $u$ to have domain $dom(u_1) = \{ b\in dom(u) \mid [b]_{\prec} \subseteq V\}$. Note that $u_0$ and $u_1$ are both in $I$ by Lemma~\ref{lem:formval_restrictions}, and that $dom(u) = dom(u_0)\cup dom(u_1)$. Then using the fact that $r\prec_V u$ and $s\prec_V u$, we have
\begin{eqnarray*}
\sum_{b\in dom(r)} r(b) +  \sum_{b\in dom(s)} s(b) &\leq& \sum_{c\in dom(u)} u(c) + \sum_{c\in dom(u_0) \cap dom(u_1) } u(c) \\
&=& \sum_{c\in dom(u_0)} u_0(c) + \sum_{c\in dom(u_1) } u_1(c)\\
&\leq& \nu(U) + \nu(V).
\end{eqnarray*}
Therefore, $\nu(U \cup V) + \nu(U\cap V) \leq \nu(U)+\nu(V)$.
\item
$\nu$ is a continuous function: Assume $U\in\PO(\I{\prec})$ and $q\in\IQ_{>0}$ and $\nu(U) > q$. Since $\I{\prec}$ is consonant (see \cite{DK19}), it suffices to find $K\in \PU(\I{\prec})$ such that $K\subseteq U$ and $\nu(W)>q$ whenever $W$ is an open set containing $K$. By definition of $g_V(I)$, there must be $r\in I$ such that $(\forall b\in dom(r))\, [b]_{\prec}\subseteq U$ and $\sum_{b\in dom(r)} r(b) > q$. Now let $K\in\PU(\I{\prec})$ be as in Lemma~\ref{lem:formval_upperpower2}. Then $K\subseteq U$, and if $K\subseteq W$ then there is $s \in I$ with  $r \prec_V s$ and $K \subseteq \bigcup_{c\in dom(s)} [c]_{\prec} \subseteq W$, hence $q < \sum_{c\in dom(s)} s(c) \leq \nu(W)$.
\end{enumerate}
It only remains to show that $g_V$ is continuous. Assume $g_V(I) \in \langle U, q\rangle$. Then there is $r\in I$ satisfying $(\forall b\in dom(r))\, [b]_{\prec}\subseteq U$ and $q < \sum_{b\in dom(r)} r(b)$. Then $I \in [r]_{\prec_V} \subseteq g_V^{-1}(\langle U,q\rangle)$, hence $g_V$ is continuous.
\end{proof}

\begin{theorem}\label{thrm:mainresult}
$\PV(\I{\prec})$ and $\I{\prec_V}$ are homeomorphic (via $f_V$ and $g_V$). 
\end{theorem}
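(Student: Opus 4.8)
Since the two preceding lemmas already establish that $f_V$ and $g_V$ are well-defined continuous maps, the plan is to prove only that they are mutually inverse, i.e., that $g_V(f_V(\nu)) = \nu$ for every $\nu \in \PV(\I{\prec})$ and $f_V(g_V(I)) = I$ for every $I \in \I{\prec_V}$; combined with the continuity already shown, this gives that $f_V$ and $g_V$ are homeomorphisms inverse to one another. I would prove each equality as two inclusions, the trivial one following directly from the definitions and the substantive one from the lemmas assembled above.

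For $g_V(f_V(\nu)) = \nu$, write $W_F = \bigcup_{b\in F}[b]_{\prec}$. The inequality $g_V(f_V(\nu))(U) \le \nu(U)$ is immediate: for any $r \in f_V(\nu)$ with $\bigcup_{b\in dom(r)}[b]_{\prec} \subseteq U$, taking $F = dom(r)$ in the definition of $f_V(\nu)$ and using monotonicity of $\nu$ gives $\sum_{b\in dom(r)} r(b) < \nu(W_{dom(r)}) \le \nu(U)$. For the reverse inequality I would fix $U \in \PO(\I{\prec})$ and a rational $q < \nu(U)$, write $U = \bigcup_{b\in S}[b]_{\prec}$, and use Scott-continuity of $\nu$ along the directed family of finite unions $\bigcup_{b\in F}[b]_{\prec}$ (with $F \subseteq S$ finite) to find a finite $F \subseteq S$ with $\nu(W_F) > q$; it then suffices to produce $r \in f_V(\nu)$ with $dom(r) \subseteq F$ and $\sum_{b\in dom(r)} r(b) > q$. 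If $\nu(W_F) = \infty$ then, since $\calP_+(F)$ is finite, $\nu(\bigcap_{b\in G}[b]_{\prec}) = \infty$ for some non-empty $G \subseteq F$, hence $\nu([b]_{\prec}) = \infty$ for any $b \in G$, and one may simply take $r = \{b\mapsto q+1\}$. Otherwise $\nu(W_F) < \infty$, so every $U_G = \bigcap_{b\in G}[b]_{\prec} \subseteq W_F$ (for non-empty $G \subseteq F$) has finite valuation, and I would apply Lemma~\ref{lem:sum_of_differences} — once to the upper set $\calP_+(F)$ and once to each upper set $\{G \in \calP_+(F) \mid G \cap F'' \ne \emptyset\}$ for non-empty $F'' \subseteq F$ — to obtain $\nu(W_F) = \sum_{G\in\calP_+(F)}(\nu(U_G) - \nu(V_G))$ and $\nu(W_{F''}) = \sum_{G\cap F''\ne\emptyset}(\nu(U_G) - \nu(V_G))$, with $V_G$ as in that lemma. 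Then split each positive mass $\nu(U_G) - \nu(V_G)$ into positive rationals $r_G(b)$, $b \in G$, whose sum lies strictly below but within $\delta_G$ of $\nu(U_G) - \nu(V_G)$, arrange $\sum_G \delta_G < \nu(W_F) - q$, and set $r(b) = \sum_{G\ni b} r_G(b)$ on the (necessarily non-empty) domain $dom(r) = \{b\in F \mid r(b) > 0\}$. A short computation then yields $\sum_{b\in F''} r(b) < \nu(W_{F''})$ for every non-empty $F'' \subseteq dom(r)$ — the condition $F'' \subseteq dom(r)$ forces at least one participating $G$, which contributes a strict inequality — and $\sum_{b\in dom(r)} r(b) > \nu(W_F) - \sum_G \delta_G > q$, as needed.

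For $f_V(g_V(I)) = I$, the inclusion $I \subseteq f_V(g_V(I))$ follows from directedness of $I$ together with Lemma~\ref{lem:formval_restrictions}: given $r \in I$ and non-empty $F \subseteq dom(r)$, one has $r|_F \in I$, picks $s \in I$ with $r|_F \prec_V s$, and restricts $s$ to $\uparw F \cap dom(s)$ (still in $I$, still above $r|_F$) to obtain an element of $I$ whose basic opens lie inside $W_F$ and whose total mass exceeds $\sum_{b\in F} r(b)$, so $g_V(I)(W_F) > \sum_{b\in F} r(b)$. For the reverse inclusion $f_V(g_V(I)) \subseteq I$, take $r \in f_V(g_V(I))$; for each non-empty $F \subseteq dom(r)$, unpack $\sum_{b\in F} r(b) < g_V(I)(W_F)$ to get $s_F \in I$ with $\bigcup_{b\in dom(s_F)}[b]_{\prec} \subseteq W_F$ and $\sum_{b\in dom(s_F)} s_F(b) > \sum_{b\in F} r(b)$, then apply Lemma~\ref{lem:formval_upperpower2} to $s_F$ and its compactness clause to the finite cover $W_F$ to get $t_F \in I$ with $s_F \prec_V t_F$, $F \prec_U dom(t_F)$, and $\bigcup_{c\in dom(t_F)}[c]_{\prec} \subseteq W_F$; in particular $dom(t_F) \subseteq \uparw F$ and $\sum_{c\in dom(t_F)} t_F(c) > \sum_{b\in F} r(b)$. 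Choosing by directedness a single $s \in I$ with $t_F \prec_V s$ for all of the finitely many sets $F$, transitivity of $\prec$ gives $\uparw dom(t_F) \cap dom(s) \subseteq \uparw F \cap dom(s)$, whence $\sum_{b\in F} r(b) < \sum_{c\in \uparw F\cap dom(s)} s(c)$ for every non-empty $F \subseteq dom(r)$; this is precisely $r \prec_V s$, so $r \in I$ because $I$ is a lower set.

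The main obstacle is the lower-approximation step in $g_V(f_V(\nu)) = \nu$: one must manufacture inside an open set of $\nu$-mass exceeding $q$ a single finitely supported $r$ all of whose sub-sums stay strictly below the corresponding $\nu$-values while the total stays above $q$. Lemma~\ref{lem:sum_of_differences} is exactly the combinatorial device that makes this distribution of mass possible — it plays the role of the splitting lemma for the probabilistic powerdomain — with the case of an infinite-valued $W_F$ peeled off beforehand.
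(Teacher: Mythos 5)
Your proposal is correct, and its overall skeleton --- mutual inverses checked as two inclusions each, with the $f_V(g_V(I)) = I$ half handled via Lemma~\ref{lem:formval_restrictions}, Lemma~\ref{lem:formval_upperpower2} (to upgrade the witnesses $s_F$ so that $F \prec_U dom(t_F)$) and a single $\prec_V$-upper bound in $I$, then transitivity of $\prec$ --- is essentially the paper's argument. Where you genuinely diverge is the construction of the witness $r \in f_V(\nu)$ in the hard direction of $g_V(f_V(\nu)) = \nu$. The paper, after extracting a finite subcover $b_0,\dots,b_n$ of mass exceeding $q$, uses a sequential telescoping decomposition $m_i = \nu([b_i]_{\prec}) - \nu([b_i]_{\prec} \cap \bigcup_{j<i}[b_j]_{\prec})$ and places a rational $r(b_i) < m_i$ on each $b_i$ with $m_i>0$; the sub-sum condition for $r \in f_V(\nu)$ then follows from a short monotonicity-plus-telescoping computation. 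You instead invoke Lemma~\ref{lem:sum_of_differences}, decomposing $\nu(W_F)$ as $\sum_{G}(\nu(U_G)-\nu(V_G))$ over $\calP_{+}(F)$ and over each upper set $\{G \mid G\cap F''\neq\emptyset\}$, then split each positive mass among the members of $G$ with a total slack below $\nu(W_F)-q$. This works: the upper sets you need are exactly those appearing in the paper's directedness argument, and your remark that any non-empty $F''\subseteq dom(r)$ meets at least one $G$ carrying positive mass supplies the required strict inequality. Your route is pleasantly symmetric --- it mirrors the splitting-lemma style of the proof that $f_V(\nu)$ is directed --- at the cost of heavier machinery than the paper uses at this point, where the enumeration-order decomposition avoids Lemma~\ref{lem:sum_of_differences} entirely. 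One small nitpick: in your infinite case, the cleanest justification that $\nu(W_F)=\infty$ forces $\nu([b]_{\prec})=\infty$ for some $b\in F$ is finite subadditivity (an immediate consequence of modularity and monotonicity), rather than the appeal to finiteness of $\calP_{+}(F)$; the conclusion you draw from it is nonetheless correct.
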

\begin{proof}
It only remains to show that $f_V$ and $g_V$ are inverses of each other.

To show that $g_V\circ f_V$ is the identity function, it suffices to show that $g_V(f_V(\nu)) \in \langle U, q\rangle$ if and only if $\nu \in \langle U,q\rangle$ for each $\nu\in\PV(\I{\prec})$ and each subbasic open $\langle U,q\rangle$. If $g_V(f_V(\nu)) \in \langle U, q\rangle$, then there must be $r\in f_V(\nu)$ with $q < \sum_{b\in dom(r)}r(b)$ and $\bigcup_{b\in dom(r)}[b]_{\prec} \subseteq U$. This implies that $dom(r)\not=\emptyset$, and using the definition of $f_V$ we obtain $q < \sum_{b\in dom(r)}r(b) < \nu(\bigcup_{b\in dom(r)}[b]_{\prec}) \leq \nu(U)$, hence $\nu \in \langle U,q\rangle$. Conversely, if $\nu \in \langle U,q\rangle$ then since $\nu$ is continuous there exist $b_0, \ldots, b_n \in \IN$ such that $\bigcup_{i\leq n}[b_i]_{\prec} \subseteq U$ and $q < \nu(\bigcup_{i\leq n}[b_i]_{\prec})$. If $\nu([b_i]_{\prec}) = \infty$ for some $i\leq n$, then the partial function $r$ defined as $dom(r) = \{b_i\}$ and $r(b_i) = q+1$ is in $f_V(\nu)$, which implies $g_V(f_V(\nu)) \in \langle U, q\rangle$. Otherwise $\nu([b_i]_{\prec}) < \infty$ for each $i\leq n$, so define
\[ m_i = \nu([b_i]_{\prec}) - \nu\big([b_i]_{\prec} \cap \bigcup_{j<i} [b_j]_{\prec}\big).\]
Note that the modularity of $\nu$ implies $m_i = \nu(\bigcup_{j\leq i} [b_j]_{\prec}) -  \nu(\bigcup_{j<i} [b_j]_{\prec})$, hence a simple inductive argument yields $\sum_{i\leq n} m_i = \nu(\bigcup_{i\leq n} [b_i]_{\prec})$, which is strictly larger than $q$. Let $G = \{ i \mid m_i > 0\}$. Then there exists $r\in\calB$ with $dom(r) = \{ b_i \mid i \in G\}$ and $(\forall i\in G)\, r(b_i) < m_i$ and $q < \sum_{b\in dom(r)} r(b)$. If $F \subseteq G$ is non-empty, then
\begin{eqnarray*}
\sum_{i\in F} r(b_i) < \sum_{i\in F} m_i
&=& \sum_{i\in F} \left(\nu([b_i]_{\prec}) - \nu\big([b_i]_{\prec} \cap \bigcup_{j<i} [b_j]_{\prec}\big)\right)\\
&\leq& \sum_{i\in F} \left(\nu([b_i]_{\prec}) - \nu\big([b_i]_{\prec} \cap \bigcup_{\substack{j<i\\j\in F}} [b_j]_{\prec}\big)\right)\\
&=& \nu\left(\bigcup_{i\in F} [b_i]_{\prec}\right).
\end{eqnarray*}
Thus, $r \in f_V(\nu)$ and $q<\sum_{b\in dom(r)} r(b)$, hence $g_V(f_V(\nu)) \in \langle U, q\rangle$.

Next we show that $f_V(g_V(I)) = I$ for each $I \in \I{\prec_V}$. By unwinding the definitions of $f_V$ and $g_V$, we have $r \in f_V(g_V(I))$ if and only if for every non-empty $F \subseteq dom(r)$ there is $s \in I$ such that $\bigcup_{c\in dom(s)} [c]_{\prec} \subseteq \bigcup_{b\in F} [b]_{\prec}$ and $\sum_{b\in F} r(b) < \sum_{c\in dom(s)} s(c)$. Thus, given any $r \in I$, by Lemma~\ref{lem:formval_upperpower} there is $s\in I$ with $r \prec_V s$ and $dom(r) \prec_U dom(s)$, hence $\bigcup_{c\in dom(s)} [c]_{\prec} \subseteq \bigcup_{b\in F} [b]_{\prec}$ and $\sum_{b\in F} r(b) < \sum_{c\in dom(s)} s(c)$, which implies $r \in f_V(g_V(I))$. Therefore, $I \subseteq f_V(g_V(I))$.

To prove $f_V(g_V(I)) \subseteq I$, fix any $r \in f_V(g_V(I))$. Then for every non-empty $F \subseteq dom(r)$ there is $s_F \in I$ such that $\bigcup_{c\in dom(s_F)} [c]_{\prec} \subseteq \bigcup_{b\in F} [b]_{\prec}$ and $\sum_{b\in F} r(b) < \sum_{c\in dom(s_F)} s_F(c)$. Using Lemma~\ref{lem:formval_upperpower2}, we can assume that $F \prec_U dom(s_F)$. Let $s \in I$ be a $\prec_V$-upper bound of all of the $s_F$. Then for any non-empty $F \subseteq dom(r)$, we have
\begin{eqnarray*}
\sum_{b\in F} r(b) &<& \sum_{c\in  \uparw F \cap dom(s_F)} s_F(c) \qquad\text{ (by choice of $s_F$)}\\
&<& \sum_{c\in \uparw F \cap dom(s)} s(c) \qquad\text{ (because $s_F \prec_V s$ and $\prec$ is transitive).}
\end{eqnarray*}
Therefore $r \prec_V s$, hence $r \in I$ because $I$ is a lower-set. It follows that $f_V(g_V(I)) \subseteq I$, which completes the proof that $f_V(g_V(I)) = I$.
\end{proof}

We remark that the homeomorphisms $f_V$ and $g_V$ are computable in the sense of TTE \cite{W00} when $\prec$ is computably enumerable, and therefore our approach is consistent with previous work on computable measures in \cite{Sch:Prob,HR09,PSZ20}. The computability of $f_V$ is obvious. For $g_V$, note that for any $U \in \PO(\I{\prec})$ and any $A\subseteq \IN$ satisfying $U = \bigcup_{a \in A} [a]_{\prec}$, Lemma~\ref{lem:formval_upperpower2}  implies  
\[
g_V(I)(U) = \bigvee\left\{ \sum_{c\in dom(s)} s(c) \,\middle|\, s\in I \,\&\, (\forall c \in dom(s))(\exists a\in A)\,  a \prec c \right\},\]
which shows that $g_V$ is computable.



\bibliographystyle{plainurl}
\bibliography{myrefs}

\end{document}